\newtheorem{observation}{Observation}
\newtheorem{lemma}{Lemma}
\newtheorem{theorem}[lemma]{Theorem}
\newtheorem{definition}[lemma]{Definition}
\newtheorem{assumption}[lemma]{Assumption}
\newcommand{\cC}{\mathcal{C}}
\newcommand{\cH}{\mathcal{H}}
\newcommand{\N}{\mathbb{N}}
\newcommand{\bone}{\mathbf{1}}
\newcommand{\TS}{\mathrm{TS}}
\newcommand{\TD}{\mathrm{TD}}
\newcommand{\RTD}{\mathrm{RTD}}
\newcommand{\ind}{\mathbbm{1}}
\definecolor{Gred}{RGB}{219, 50, 54}
\definecolor{Ggreen}{RGB}{60, 186, 84}
\definecolor{Gblue}{RGB}{72, 133, 237}
\definecolor{Gyellow}{RGB}{247, 178, 16}
\definecolor{ToCgreen}{RGB}{0, 128, 0}
\definecolor{myGold}{RGB}{231,141,20}
\definecolor{myBlue}{rgb}{0.19,0.41,.65}
\definecolor{myPurple}{RGB}{175,0,124}
\providecommand{\Comments}{3}
\newcommand{\mytodo}[1]{\ifnum\Comments=1{#1}\fi}
\newcommand{\tableoftodos}{\ifnum\Comments=1 \listoftodos[Comments/To Do's] \fi}
\title{A Note on Hardness of Computing Recursive Teaching Dimension}
\author{
\makebox[.3\linewidth]
{Pasin Manurangsi}\\
Google Research\\
Bangkok, Thailand\\
\texttt{pasin@google.com}
}
\date{\today}
\begin{document}

\maketitle

\begin{abstract}
In this short note, we show that the problem of computing the recursive teaching dimension (RTD) for a concept class (given explicitly as input) requires $n^{\Omega(\log n)}$-time, assuming the exponential time hypothesis (ETH). This matches the running time $n^{O(\log n)}$ of the brute-force algorithm for the problem.
\end{abstract}

\section{Introduction}

Given a finite domain set $X$, a \emph{concept} is a boolean function $c: X \to \{0, 1\}$. A \emph{concept class} $\cC$ is simply a set of concepts. We write $c|_S$ to denote the restriction of a concept $c$ on a subset $S \subseteq X$, i.e., $c|_S: S \to \{0, 1\}$ such that $c|_S(x) = c(x)$ for all $x \in S$. 

Theory of learning explores how to learn a concept class in many different settings. One such setting is when there is a teacher who wishes to teach a concept $c \in \cC$ to a learner, where the teaching is done through giving examples $(x, c(x))$. The notions of teaching set and teaching dimension--formalized below--captures the sample complexity required to teach in this model.

\begin{definition}[Teaching Set and Teaching Dimension~\cite{Shinohara91,GoldmanK95}]
A \emph{teaching set} of a concept $c$ with respect to a concept class $\cC$ is a subset $S$ such that $c|_S \ne c'|_S$ for any $c' \in \cC \setminus \{c\}$. We write $\TS(c, \cC)$ to denote the minimum size teaching set of a concept $c$ w.r.t. a concept class $\cC$.

The \emph{teaching dimension} of a concept class $\cC$, denoted by $\TD(\cC)$, is defined as $\max_{c \in \cC} \TS(c, \cC)$.
\end{definition}

Although this is a compelling and natural model of teaching, the teaching dimension of many simple concept classes turn out to be prohibitively large. For example, the concept class consisting of the all-zero function and point functions\footnote{Point functions are those that evaluate to one on a single input.}--despite having VC dimension of only two, have teaching dimension that is as large as the domain size. This specific example demonstrates that having a single ``hard to teach'' function (i.e. the all-zero function) can significantly increase the teaching dimension. In an effort to make the notion more robust, a more relaxed model has been proposed~\cite{ZillesLHZ11,DoliwaFSZ14,SameiSYZ14}. In this model, teaching happens in ``layers''. In the first layer / iteration, the teacher may only teach the concepts with smallest teaching dimension. These are then removed from the concept class and the process continue. This process can be formalized as a teaching plan and the associated sample complexity is referred to as the \emph{recursive teaching dimension (RTD)}:

\begin{definition}[Recursive Teaching Dimension~\cite{ZillesLHZ11}]
For a concept class $\cC$, a \emph{teaching plan} is a sequence $(c_1, S_1), \dots, (c_m, S_m)$ such that $\cC = \{c_1, \dots, c_m\}$ and, for all $i \in [m]$, $S_i$ is a teaching set of the concept $c_i$ with respect to the concept class $\{c_i, \dots, c_m\}$.

The \emph{recursive teaching dimension} of a concept class $\cC$, denoted by $\RTD(\cC)$, is the minimum of $\max_{i \in [m]} S_i$ across all teaching plans of the class $\cC$.
\end{definition}

It is simple to see that RTD of the aforementioned concept class is small: the teacher can just teach the point functions first (requiring just a single example each) and, once they are removed, the remaining concept must be the all-zero function (requiring zero example to teach). It turns out that this is not an isolated phenomenon: Improving on and extending several known bounds~\cite{Kuhlmann99,DoliwaFSZ14,MoranSWY15,ChenCCT16,WigdersonY16}, Hu et al.~\cite{HuWLW17} showed that the RTD of a concept class is at most a quadratic function of its VC Dimension. Given the small sample complexity, RTD emerges as a compelling model for teaching.



\subsection{Our contributions}

Although computing the minimum teaching set size of a given concept w.r.t. a given concept class is known to be NP-hard~\cite{Shinohara91}, to the best of our knowledge, there was no prior study of the computational complexity of the recursive teaching dimension (although the problem was suggested in \cite{ManurangsiR17}). Our main contribution is to prove that the latter is also unlikely to be solvable in polynomial time.

To formalize our results, we write $k$-RTD to denote the problem of determining whether a given concept class $\cC$ has recursive teaching dimension at most $k$. Here $\cC$ is given as input in explicit form, i.e. as a boolean matrix $M \in \{0, 1\}^{\cC \times X}$ where $c(x) = M_{c, x}$.

Let us note that $k$-RTD can be solved in $n^{O(\log n)}$ where $n$ denote the input size (i.e. $|\cC| \times |X|$). This is because\footnote{This follows e.g. from Corollary 8 and Equation (1) of \cite{DoliwaFSZ14}.} $\RTD(\cC) \leq \lceil \log |\cC| \rceil$. Thus, if $k > \log |\cC|$, we can simply return YES. Otherwise, for each concept, we may enumerate all $k$-size subsets of the domain $X$ to determine whether it is a teaching set in time $n^{O(k)} \leq n^{O(\log n)}$. We then strip out a concept with teaching dimension at most $k$ and repeat the process. We either fail at some point (corresponding to the NO case) or we find a teaching plan whose teaching sets all have size at most $k$ (corresponding to the YES case).

Given that $k$-RTD can be solved in $n^{O(\log n)}$ time, it is unlikely to be NP-hard; otherwise, all problems in NP would be solvable in $n^{O(\log n)}$ time. To this end, we need a different complexity assumption to prove the hardness of our problem. Our result relies on the Exponential Time Hypothesis (ETH)~\cite{ImpagliazzoP01,ImpagliazzoPZ01}\footnote{See \Cref{sec:eth} for a formal definition and discussion on ETH.}. Under this hypothesis, we show that $k$-RTD requires $n^{\Omega(\log n)}$ time, as stated below. In other words, our lower bound asserts, under ETH, that the aforementioned brute-force algorithm is essentially the best possible for $k$-RTD.

\begin{theorem} \label{thm:main-lb}
Assuming ETH, there is no $n^{o(\log n)}$-time algorithm for $k$-RTD.
\end{theorem}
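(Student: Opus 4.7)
The plan is to prove \Cref{thm:main-lb} via a reduction from 3-SAT. Under ETH combined with the Sparsification Lemma of Impagliazzo, Paturi, and Zane, 3-SAT on $n$ variables and $O(n)$ clauses requires $2^{\Omega(n)}$ time. I would aim to produce, from such an instance, a $k$-RTD instance of total size $N = 2^{\Theta(\sqrt{n})}$ with parameter $k = \Theta(\sqrt{n})$. Since $\log N = \Theta(\sqrt{n})$, any $n^{o(\log n)} = N^{o(\log N)}$ algorithm for $k$-RTD would solve the 3-SAT instance in $2^{o(n)}$ time, contradicting ETH. Thus the blow-up budget of the reduction is exactly the classical ``$\sqrt{n}$ regime'' used to match an $n^{O(\log n)}$ brute-force with ETH-based lower bounds.

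The natural intermediate step is to partition the $n$ variables into $T = \sqrt{n}$ blocks of $\sqrt{n}$ variables each and view the problem as a $2$-CSP (or Label Cover) with $T$ super-variables, each over an alphabet $[q]$ of size $q = 2^{\Theta(\sqrt{n})}$, with $O(n)$ super-constraints of constant arity. The concept class $\cC$ is then built with $|\cC|, |X| = \mathrm{poly}(T, q) = 2^{O(\sqrt{n})}$. Its main body consists of ``assignment'' concepts $c_{i, a}$ indexed by $(i, a) \in [T] \times [q]$, together with gadget concepts whose role is to force the teaching plan to proceed through the super-variables in a prescribed order. The domain $X$ is chosen so that the minimum teaching set of $c_{i, a}$, evaluated against the concepts that remain at the moment it is taught, is small precisely when the values committed at earlier levels, together with $a$, form a partial assignment extendible to a satisfying one.

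For completeness, given a satisfying assignment $(a_1^*, \ldots, a_T^*)$, the teacher processes super-variables in order $1, 2, \ldots, T$: at level $i$, she first teaches $c_{i, a_i^*}$ (whose teaching set is small because the prefix is consistent) and then clears the remaining $c_{i, a}$ for $a \neq a_i^*$ using auxiliary ``difference'' points. For soundness, if no satisfying assignment exists, then any teaching plan must, at some stage, face a concept whose minimum teaching set against the current residual class exceeds $k$.

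The main technical obstacle is the soundness direction. The teacher has enormous combinatorial freedom---any ordering of concept removals is allowed---and it is not a priori clear why an optimal plan should respect the intended level-by-level structure. The gadget concepts must therefore be designed so that any deviation (skipping a level, interleaving two levels, or committing to two distinct values at the same level) immediately forces some surviving concept to require a teaching set of size $> k$, while the ``committed'' value at each level is faithfully reflected in the residual class so that a violated CSP constraint manifests as a large teaching set at a later level. Calibrating these blocker and constraint gadgets so that the completeness and soundness thresholds coincide at the same $k = \Theta(\sqrt{n})$ is, I expect, the primary challenge.
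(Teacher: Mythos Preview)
Your proposal has a genuine gap: you identify soundness as the crux but supply no gadget construction, and the mechanism you envision is delicate in a way you have not addressed. By the characterization $\RTD(\cC)=\max_{\cC'\subseteq\cC}\TD_{\min}(\cC')$, proving $\RTD(\cC)>k$ in the NO case amounts to exhibiting a single subclass $\cC'$ in which \emph{every} concept needs more than $k$ examples; your sequential narrative (``at some stage, face a concept whose minimum teaching set \dots\ exceeds $k$'') does not by itself produce such a $\cC'$, and you do not say what it would be. Moreover, teaching-set size is monotone non-increasing under removing concepts, so the residual class records only \emph{which} concepts remain, not which teaching sets were used or in what order. If, as in your completeness sketch, all of $\{c_{i,a}\}_a$ are cleared before advancing to level $i{+}1$, the residual after level $i$ is independent of the ``committed'' $a_i^*$; making the commitment visible downstream would require auxiliary gadget concepts whose removability itself depends on $a_i^*$, and you have not designed these.

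The paper avoids all of this by routing through $k$-Dominating Set rather than 3-SAT directly. Your $\sqrt{n}$-block grouping does appear, but at the SAT-to-Dominating-Set stage, where the ETH lower bound for $O(\log N)$-Dominating Set is folklore. The reduction from $k$-Dominating Set to $k$-RTD is then \emph{flat}, with no level structure or ordering argument. The only gadget is the class $\cH$ of all functions on a $(2k{+}1)$-element set $Z$ with exactly $k$ ones, which satisfies (i) $\TS(h,\cH)=k$, (ii) every $k$-size teaching set $S$ of $h$ has $h|_S=\bone$, and (iii) $\TS(h,\cH\cup\{\bone\})\geq k+1$. One replicates both the all-ones ``constraint'' concept and each vertex concept along $\cH$, over the domain $(V\times Z)\cup(Z\times V)$. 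Property (iii) forces every vertex concept to need $k{+}1$ examples while any constraint concept is still present; property (ii) forces any $k$-size teaching set of a constraint concept to be all-ones on its support, hence to project to a $k$-dominating set of $G$. Soundness is then simply the statement $\TD_{\min}(\cC)>k$ on the full class---the hard subclass is $\cC$ itself, and no ordering reasoning is needed.
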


\section{Preliminaries}
\label{sec:prelim}

We let $\bone_X$ denote the all-one function on domain $X$; when the domain is clear from context, we may omit the subscript $X$. Let $\ind$ denote the indicator variable (i.e. $\ind[\psi] = 1$ if the condition $\psi$ holds and $\ind[\psi] = 0$ otherwise). 

In our reduction, it will be convenient to deal with tuples and set of tuples. Recall that, for two sets $A, B$, we write $A \times B$ to denote $\{(a, b) \mid a \in A, b \in B\}$. Furthermore, we let $\pi_1: A \times B \to A$ (resp. $\pi_2: A \times B \to B$) denote the projection onto the first (resp. second) coordinate, i.e. $\pi_1((a, b)) = a$ (resp. $\pi_2((a, b)) = b$). Finally, when $A, B$ have the same size, we use $\nu(A, B)$ to denote any subset of $A \times B$ such that $\pi_1(\nu(A, B)) = A$ and $\pi_2(\nu(A, B)) = B$.

\subsection{Minimum Teaching Set Sizes and Recursive Teaching Dimension}

It will be convenient to use a characterization of RTD in terms of the minimum teaching set sizes among subsets of concepts. To do this, let us first define $\TD_{\min}(\cC) := \min_{c \in \cC} \TS(c, \cC)$. The relation between this quantity and RTD is given below:

\begin{lemma}[\cite{DoliwaFSZ14}] \label{lem:rtd-characterization}
For any concept class $\cC$, $\RTD(\cC) = \max_{\cC' \subseteq \cC} \TD_{\min}(\cC')$.
\end{lemma}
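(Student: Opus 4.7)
The plan is to prove the two inequalities $\RTD(\cC) \leq \max_{\cC' \subseteq \cC} \TD_{\min}(\cC')$ and $\RTD(\cC) \geq \max_{\cC' \subseteq \cC} \TD_{\min}(\cC')$ separately, using two straightforward observations: (i) any teaching set of $c$ with respect to a superclass remains a teaching set of $c$ with respect to a subclass, and (ii) a teaching plan can be built greedily by repeatedly peeling off a concept attaining $\TD_{\min}$.

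For the $\geq$ direction I would fix any teaching plan $(c_1,S_1),\dots,(c_m,S_m)$ realizing $\RTD(\cC)$ and pick an arbitrary $\cC' \subseteq \cC$. Let $i^\star$ be the smallest index with $c_{i^\star} \in \cC'$. Then $\cC' \subseteq \{c_{i^\star},\dots,c_m\}$, so by observation (i) the set $S_{i^\star}$ is also a teaching set of $c_{i^\star}$ with respect to $\cC'$, giving $\TD_{\min}(\cC') \leq |S_{i^\star}| \leq \RTD(\cC)$. Taking the maximum over $\cC'$ yields the bound.

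For the $\leq$ direction I would produce a teaching plan greedily. Set $\cC_1 := \cC$; at step $i$, pick $c_i \in \argmin_{c \in \cC_i} \TS(c, \cC_i)$ and let $S_i$ be a minimum-size teaching set of $c_i$ with respect to $\cC_i$, then set $\cC_{i+1} := \cC_i \setminus \{c_i\}$. By construction $|S_i| = \TD_{\min}(\cC_i) \leq \max_{\cC' \subseteq \cC} \TD_{\min}(\cC')$, and the sequence $(c_1,S_1),\dots,(c_m,S_m)$ is a valid teaching plan of $\cC$, so $\RTD(\cC)$ is at most this maximum.

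I do not expect any serious obstacle; the only point one must pay attention to is checking that $\{c_i,\dots,c_m\}$ coincides with $\cC_i$ in the greedy construction, and that observation (i) really does apply to the first index where the plan touches $\cC'$. Both are immediate from the definition of teaching set, since shrinking $\cC$ can only make the condition $c|_S \ne c'|_S$ easier to satisfy.
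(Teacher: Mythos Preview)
Your argument is correct and is the standard two-inequality proof of this characterization. Note, however, that the paper does not actually prove \Cref{lem:rtd-characterization}; it simply quotes the result from \cite{DoliwaFSZ14} and uses it as a black box, so there is no ``paper's own proof'' to compare against. Your greedy peeling for the $\leq$ direction and first-index argument for the $\geq$ direction are exactly how this lemma is established in the literature.
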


\subsection{Exponential Time Hypothesis}
\label{sec:eth}

Recall that in the 3SAT problem, we are given a 3CNF formula with $n$ variables and $m$ clauses. The goal is to decide whether there is an assignment that satisfies all the clauses.

3SAT is one of the canonical NP-complete problems, included in Karp's seminal paper~\cite{Karp72}, and is often used as the starting point in NP-hardness reductions. The brute-force algorithm for 3SAT is to enumerate all $2^n$ assignments and check whether each of them satisfies all the clauses. This results in a running time of $2^{O(n)}$. Despite significant efforts from researchers (see e.g.~\cite{HansenKZZ19} and references therein), no $2^{o(n)}$-time algorithm is known. This led to a conjecture--known as the \emph{Exponential Time Hypothesis (ETH)}--that such an algorithm does not exist.

\begin{assumption}[Exponential Time Hypothesis (ETH)~\cite{ImpagliazzoP01,ImpagliazzoPZ01}]
3SAT cannot be solved in $2^{o(n)}$ time where $n$ denotes the number of variables.
\end{assumption}

Supporting evidences of this conjecture include lower bounds against certain (strong) algorithmic techniques, such as the sum-of-squares (aka Lasserre) hierarchy~\cite{Schoenebeck08}.

Since its proposal, ETH has been shown to provide tight running time lower bounds for numerous fundamental NP-hard problems and parameterized problems (see e.g.~\cite{LokshtanovMS11}). More recently, it has also been used as a basis for lower bounds for quasi-polynomial problems~\cite{BravermanKW15,BravermanKRW17,Rubinstein17,Manurangsi17,Rubinstein17b,ManurangsiR17,Man22}; we continue this line of research--by employing it for our lower bound for RTD.

\subsection{Dominating Set}

Our reduction will be from the Dominating Set problem.  Recall that, in an undirected graph $G = (V, E)$, we say that a vertex $u$ \emph{dominates} another vertex $v$ iff either $u = v$ or there is an edge between $u$ and $v$. We say that a set $T \subseteq V$ is a \emph{dominating set} iff every vertex in $V$ is dominated by at least one vertex in $T$.

\begin{definition}[$k$-Dominating Set]
In the $k$-Dominating Set problem, we are given as input an $N$-vertex graph $G$. The goal is to decide whether there exists a dominating set of $G$ of size $k$.
\end{definition}

An obvious algorithm is to enumerate through all $k$-subsets of vertices and check whether each of them is a dominating set; this runs in time $N^{O(k)}$. It is known that, under ETH, this is essentially the best possible running time, even for $k = O(\log N)$:

\begin{theorem} \label{thm:log-domset-lb}
Assuming ETH, there is no $N^{o(\log N)}$-time algorithm for $O(\log N)$-Dominating Set.
\end{theorem}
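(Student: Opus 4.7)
The plan is to reduce sparsified 3SAT to $k$-Dominating Set in two steps: first to a multicolored $k$-Clique instance via a balanced partition of the variables, and then to $k$-Dominating Set via a standard guard/witness gadget. By the Sparsification Lemma of Impagliazzo--Paturi--Zane, ETH implies that 3SAT on $n$ variables with $m = O(n)$ clauses cannot be solved in $2^{o(n)}$ time; this is the starting point.

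For the first step, set $k := \lceil \sqrt{n} \rceil$ and partition the $n$ variables into $k$ groups of $\Theta(\sqrt n)$ variables each. For each group $i$, enumerate all $2^{O(\sqrt n)}$ partial assignments to its variables and create a vertex in color class $V_i$ for each partial assignment that does not internally violate any clause. Connect $u \in V_i$ to $v \in V_j$ ($i \ne j$) whenever the union of the corresponding partial assignments violates no clause whose variables all lie in groups $i$ or $j$. A colored $k$-clique (one vertex per color) then corresponds precisely to a satisfying assignment. The graph has $N = k \cdot 2^{O(\sqrt n)} = 2^{O(\sqrt n)}$ vertices with $k = \Theta(\log N)$, so an $N^{o(\log N)}$-time algorithm for colored $k$-Clique would yield a $2^{o(n)}$-time algorithm for 3SAT, contradicting ETH.

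For the second step, reduce colored $k$-Clique to $k$-Dominating Set via the following gadget. Construct $G' = (V', E')$ by (i) making each $V_i$ a clique in $G'$; (ii) adding, for each $i$, two ``guard'' vertices $a_i, b_i$, each adjacent only to $V_i$ and not to one another; and (iii) for each non-edge $(u,v)$ of the clique instance with $u \in V_i$, $v \in V_j$, $i \ne j$, adding a ``witness'' vertex $w_{u,v}$ adjacent to $(V_i \setminus \{u\}) \cup (V_j \setminus \{v\})$. A budget argument---using that witnesses are non-adjacent to guards, so the only cheap way to dominate both $a_i$ and $b_i$ is to include a vertex of $V_i$ in the dominating set---forces any $k$-dominating set to pick exactly one $u_i \in V_i$ per color; dominating each $w_{u,v}$ then forces $(u_i, u_j) \ne (u,v)$ for every non-edge, so $\{u_1, \dots, u_k\}$ is a clique in the original graph. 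Since $|V'| = O(N^2) = 2^{O(\sqrt n)}$ we still have $k = \Theta(\log |V'|)$, so the ETH lower bound transfers: an $|V'|^{o(\log |V'|)}$-time algorithm for $k$-Dominating Set would solve 3SAT in $2^{o(n)}$ time.

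The main technical obstacle is the budget argument for the dominating-set gadget---specifically, ruling out degenerate dominating sets that mix guards, witnesses, and original vertices in unintended ways. None of the steps is deep, but careful attention to the parameter alignment $k = \Theta(\sqrt n) = \Theta(\log N) = \Theta(\log |V'|)$ is what makes the quasi-polynomial lower bound tight.
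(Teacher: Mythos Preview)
Your two-step plan is sound in outline and the parameter arithmetic $k = \Theta(\sqrt n) = \Theta(\log N)$ is correct, but the first step as written has a genuine gap. In your reduction from 3SAT to multicolored $k$-Clique you place an edge between $u \in V_i$ and $v \in V_j$ when the combined partial assignment violates no clause whose variables all lie in groups $i$ and $j$. A 3-clause whose three variables land in three \emph{distinct} groups $i,j,\ell$ is therefore never tested by any edge: no pair of partial assignments can ``violate'' it (one literal is always unassigned), so such a clause imposes no constraint on the clique whatsoever. Consequently a colored $k$-clique need not correspond to a satisfying assignment, and soundness fails. With $k = \Theta(\sqrt n)$ groups this is the typical case, not a corner case.

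The repair is standard but must be made explicit: either first pass through a problem with only binary constraints (e.g.\ reduce 3SAT to 3-Coloring, which preserves ETH-hardness with linear blowup) before doing the partition-into-groups trick, or skip Clique entirely and reduce directly to Dominating Set, where a clause vertex is dominated by \emph{any} partial assignment that satisfies one of its literals---this handles clauses spanning three groups for free. The latter is precisely what the paper does, quoting the P{\u a}tra{\c s}cu--Williams reduction that produces an $N$-vertex $k$-Dominating Set instance with $N = k\cdot 2^{\lceil n/k\rceil} + m + k$ directly from 3SAT; setting $k = \lceil \sqrt n\rceil$ then gives the bound in one line, with no Clique detour and no need for the Sparsification Lemma. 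Your second step (multicolored Clique to Dominating Set via two guards per color and non-edge witnesses) is correct and standard; only the first step needs to be fixed.
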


The above result is folklore and is implicitly implied by a reduction dating back to~\cite{MegiddoV88}, but we are not aware of an explicit reference so we give a proof for completeness in \Cref{sec:proof-domset}.

\section{Main Reduction}

As alluded to earlier, our main reduction will be from the $k$-Dominating Set problem to $k$-RTD:

\begin{theorem} \label{thm:main-red}
There is a $2^{O(k)} N^{O(1)}$-time reduction from $k$-Dominating Set to $k$-RTD.
\end{theorem}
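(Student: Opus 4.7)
We reduce from $k$-Dominating Set. The core observation driving the construction is: on domain $V$, setting $c^* = \bzero$ and $c_v = \ind_{N[v]}$ for each $v \in V$, the minimum teaching set of $c^*$ within the subclass $\{c^*\} \cup \{c_v : v \in V\}$ is exactly the minimum dominating set size of $G$. Indeed, distinguishing $c^*$ from $c_v$ requires including some element of $N[v]$, which is precisely the condition that $v$ be dominated. By \Cref{lem:rtd-characterization}, $\RTD(\cC) = \max_{\cC' \subseteq \cC} \TD_{\min}(\cC')$, so if we can engineer a well-chosen witness subclass $\cC^* \subseteq \cC$ with $\TD_{\min}(\cC^*) \geq k^*$ (where $k^*$ is the minimum dominating set size), we obtain the soundness direction immediately.

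The raw subclass $\{c^*\} \cup \{c_v\}$ does not by itself achieve $\TD_{\min} = k^*$, because each individual $c_v$ is typically much easier to distinguish from the rest than $c^*$ is (one only needs to hit each symmetric difference $N[v] \triangle N[u]$, which can often be done with very few elements). To remedy this, the reduction augments the domain with auxiliary elements and adds, for every vertex $v$, a family of \emph{companion concepts} that agree with $c_v$ on the core coordinates but disagree on carefully chosen auxiliary ones. The companions will be designed so that, inside $\cC^*$, distinguishing $c_v$ from them requires at least $k^*$ elements, effectively propagating the dominating-set hardness from $c^*$ to every vertex concept, while also ensuring the companions themselves require $\geq k^*$ examples within $\cC^*$. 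This yields $\TD_{\min}(\cC^*) = k^*$.

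The correctness proof then splits in two directions. For completeness, assuming $G$ has a dominating set $D$ of size $\leq k$, we exhibit a teaching plan for $\cC$ with maximum set size $\leq k$: first peel off the companion concepts in an order exploiting their pairwise structure (each is easy to isolate once a few others have been removed); then teach the vertex concepts $c_v$ in a suitable order; and finally teach $c^*$ using $D$ itself as its teaching set. For soundness, assuming $G$ has no dominating set of size $\leq k$, we apply \Cref{lem:rtd-characterization} to the designated witness subclass $\cC^*$: $c^*$ requires more than $k$ examples by the core observation, and each $c_v$ requires more than $k$ examples by the companion design, so $\TD_{\min}(\cC^*) > k$ and thus $\RTD(\cC) > k$.

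The main technical obstacle is the design of the companion structure. It must simultaneously (i) force every concept in $\cC^*$ to have teaching set $\geq k^*$ within $\cC^*$, (ii) never combine with other concepts to create a spurious subclass $\cC' \subseteq \cC$ with $\TD_{\min}(\cC') > k^*$, and (iii) fit within a $2^{O(k)}\,\mathrm{poly}(N)$ size budget. A natural template is to attach $2^{O(k)}$ auxiliary elements per vertex (for instance, indexed by patterns in $\{0,1\}^{O(k)}$ or by subsets of a fixed size-$O(k)$ set), so that the hitting problem induced on $c_v$ mirrors the dominating-set hitting problem, while the pairwise structure of companions still lets them be peeled off one at a time outside the witness subclass. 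Getting this balance exactly right, and verifying both directions carefully, is where the bulk of the technical effort lies.
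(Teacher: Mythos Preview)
Your proposal has the right high-level shape (use \Cref{lem:rtd-characterization}, add gadgetry so that no concept in a witness subclass is teachable with $\le k$ examples), and the size budget $2^{O(k)}\cdot\mathrm{poly}(N)$ is correct. However, there is a genuine gap in the plan, and it is exactly the obstacle the paper isolates in its ``Overview II/III'' discussion.

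Your completeness argument peels the companion concepts first, then the vertex concepts, and teaches the single constraint concept $c^*$ last. But the teaching set size of the \emph{first} companion you peel is a fixed number determined by the concept class $\cC$; it does not depend on whether $G$ has a dominating set of size $\le k$. Hence if that first companion is teachable with $\le k$ examples in the YES case, it is equally teachable with $\le k$ examples in the NO case, which immediately gives $\TD_{\min}(\cC)\le k$ and kills your soundness witness $\cC^*=\cC$. Conversely, if you harden the companions so that none of them is teachable with $\le k$ examples inside $\cC$ (e.g.\ by adjoining an all-ones ``cap'' on each vertex's auxiliary block), then no concept in $\cC$ is teachable with $\le k$ examples even in the YES case, and completeness fails. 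The sentence ``propagating the dominating-set hardness from $c^*$ to every vertex concept'' is where the argument slips: with a single $c^*$ and companions that agree with $c_v$ on the core coordinates, the teaching cost of a companion is controlled by the graph-independent auxiliary structure, not by the dominating number $k^*$, so the claimed lower bound ``$\ge k^*$'' cannot hold as stated.

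The paper's resolution is to reverse the teaching order and to \emph{replicate the constraint concept} as well. Concretely, there is a family $\{c_h\}_{h\in\cH}$ of constraint concepts (not just one $c^*$), built from the same gadget $\cH$ (\Cref{lem:gadget}) used to replicate the vertex concepts, and the domain contains a second replicated block $V\times Z$. In the YES case one teaches a constraint concept $c_h$ \emph{first}, using a $k$-element set that simultaneously encodes a dominating set (via the $V$-coordinate) and a $k$-size teaching set for $h$ in $\cH$ (via the $Z$-coordinate); after all $c_h$ are removed, each vertex concept becomes teachable with exactly $k$ examples by property~1 of the gadget. In the NO case, property~3 of the gadget forces every vertex concept to need $k{+}1$ examples as long as some $c_h$ is present, while properties~1--2 force any $k$-size teaching set of $c_h$ to lie entirely in $V\times Z$ with all-one values, which in turn forces its $V$-projection to be a dominating set---impossible with $\le k$ elements. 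The replication of $c^*$ is precisely what makes the constraint concept's teaching cost depend on the dominating number; your single-$c^*$ design cannot achieve this.
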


Before we prove \Cref{thm:main-red}, let us note that it easily implies our main theorem (\Cref{thm:main-lb}).

\begin{proof}[Proof of \Cref{thm:main-lb}]
Suppose that there is an algorithm $A$ for $k$-RTD that runs in time $n^{o(\log n)}$. Given any $k$-Dominating Set instance $(G, k)$ where $k = O(\log N)$, we use the reduction in \Cref{thm:main-red} to produce an instance $(\cC, k)$ of $k$-RTD and run $A$ on $\cC$. This algorithm solves $k$-Dominating Set problem. Furthermore, since the reduction runs in time $2^{O(k)}N^{O(1)} = N^{O(1)}$, we have $n = N^{O(1)}$ and thus this algorithm runs in time $N^{o(\log N)}$. From \Cref{thm:log-domset-lb}, this violates ETH.
\end{proof}

We now give a high-level overview of our reduction used in \Cref{thm:main-red}. We will sometimes be informal; the full formal proof will be presented later.

\paragraph{Overview I: First Attempt.} Our reduction starts from the proof of \cite{Shinohara91} who show that computing $\TS(c^*, \cC)$ for a given concept class $\cC$ and a given concept $c^* \in \cC$ is NP-hard\footnote{The original reduction is from the Hitting Set problem, which is equivalent to the Dominating Set problem. We formulate it in terms of the latter for consistency.}. The reduction is as follows. Starting from a $k$-Dominating Set instance $G = (V, E)$, let the domain $X$ be $V$. For each $u \in V$, create a ``vertex'' concept $c_u$ such that $c_u(v) = 0$ iff $v$ dominates $u$. Then, create another ``constraint'' concept $c^* = \bone_X$, i.e. the all-one function. It is simple to check that a set $T \subseteq X$ is a teaching set for $c^*$ if and only if it is a dominating set of $G$. Therefore, $\TS(c^*, \cC) \leq k$ iff $G$ is a YES instance of $k$-Dominating Set.

While this reduction gives the hardness for computing $\TS(c^*, \cC)$, it does not give the hardness for computing $\RTD(\cC)$. The reason is that we do not have any guarantees on the minimum teaching set sizes of the vertex concepts $c_u$. If these are all smaller than $k$, then it could be the case that $\RTD(\cC) < k$ regardless of whether there is a $k$-size dominating set. (That is, we can simply first teach the vertex concepts.)

\paragraph{Overview II: Handling Vertex Concepts.} To solve this issue, instead of having a single vertex concept $c_u$ for each $u \in V$, we will replicate the concepts multiple times. Specifically, we extend the domain $X$ to be $V \cup Y$ where $Y$ is carefully chosen. Each copy $c_{u, i}$ has the same values as before on $V$ but they have different values on $Y$. Finally, we again let $c^*$ evaluates to one on $Y$. The key here is to choose $c_{u, i}(Y)$ carefully in such a way that the minimum teaching set for $c_{u, i}$ w.r.t. $\{c_{u, j}\}_{j}$, i.e. the concept class of all vertex concepts corresponding to the same vertex, is $k$. Furthermore, it should become $k + 1$ if we add the all-one function into the concept class. Roughly speaking, this means that $c^*$ has to be taught first before the vertex concepts.

This ostensibly solves our aforementioned problem, but 
unfortunately it introduces a different issue: Such a construction in turn makes $\TS(c^*, \cC)$ small. In other words, we can teach $c^*$ using a teaching set of size less than $k$ regardless of whether there is a small dominating set. This violates the soundness of the reduction. 

\paragraph{Overview III: Handling Constraint Concept.}
Our final modification is to fix this problem. It turns out that the fix is quite similar to before: Replicate $c^*$ multiple times. While this high-level idea is simple, making it work is delicate. Specifically, the issue here is that, if we replicate $c^*$ multiple times, then differentiating the different copies requires us to set $c^*(x) = 0$ for some $x \in X$. Now, picking such an $x$ in a teaching set may allow us to get away without picking a dominating set. To handle this, we have to replicate the vertex set $V$ multiple times in the domain $X$ and set the values of each $c^*$ carefully in such a way that any $k$-size teaching set must select only the $x \in X$ such that $c^*(x) = 1$. This completes the high-level overview of our approach.

\subsection{Proof of \Cref{thm:main-red}}

We will now proceed to the formal proof. In the above overview, we said that the replication has to be done using carefully chosen patterns of values. Such patterns are encapsulated in the following gadget, which will be used for replicating both the vertex concepts and constraint concepts.

\begin{lemma}[Gadget] \label{lem:gadget}
For any $k \in \N$, let $p = 2k + 1$ and $q = \binom{2k + 1}{k}$. Then, there exists a $q$-size concept class $\cH$ over a $p$-size domain $Z$ such that the following conditions hold.
\begin{enumerate}
\item For every $h \in \cH$, $\TS(h, \cH) = k$.
\item For every $h \in \cH$, if $S$ is a $k$-size teaching set of $h$ w.r.t. $\cH$, then $h|_S = \bone$.
\item For every $h \in \cH$, $\TS(h, \cH \cup \{\bone\}) \geq k + 1$.
\end{enumerate}
Moreover, the class $\cH$ can be constructed in $(pq)^{O(1)}$ time.
\end{lemma}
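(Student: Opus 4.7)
The plan is to take $\cH$ to be the collection of indicator functions of $k$-subsets. Concretely, let $Z = \{1, \ldots, 2k+1\}$ and for each $k$-subset $A \subseteq Z$ define $h_A(x) := \ind[x \in A]$, so $h_A$ takes value $1$ exactly on $A$. Then $|\cH| = \binom{2k+1}{k} = q$ and all of $\cH$ can trivially be written down in $(pq)^{O(1)}$ time. The entire lemma will follow from one counting observation about how many $h_{A'}$ agree with a fixed $h_A$ on a given subset of $Z$.

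The key step is to fix $A$ and a candidate $S \subseteq Z$, set $T := A \cap S$, and count the $h_{A'} \in \cH$ with $h_{A'}|_S = h_A|_S$. The condition is equivalent to $A' \cap S = T$, and since $|A'| = k$, such an $A'$ is exactly $T \cup B$ for a $(k - |T|)$-subset $B \subseteq Z \setminus S$. Hence the number of such $A'$ is $\binom{2k+1 - |S|}{k - |T|}$, and $S$ is a teaching set for $h_A$ iff this count equals $1$ (i.e.~only $A$ itself qualifies).

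All three properties then drop out using the elementary fact that $\binom{n}{r} = 1$ iff $r \in \{0, n\}$. If $|S| < k$, then $1 \leq k - |T| \leq k < 2k+1-|S|$, so the binomial coefficient is at least $2$ and $S$ is never a teaching set; hence $\TS(h_A, \cH) \geq k$. If $|S| = k$, the count becomes $\binom{k+1}{k-|T|}$, which equals $1$ only when $|T| = k$, forcing $T = S \subseteq A$ and hence $S = A$. This simultaneously shows that $S = A$ is a $k$-size teaching set (Property 1) and that it is the \emph{only} such teaching set, which since $h_A|_A = \bone$ gives Property 2. Property 3 is then immediate: the unique candidate $S = A$ satisfies $\bone|_A = h_A|_A$, so $A$ cannot separate $h_A$ from $\bone$, and smaller $S$ never worked in $\cH$ to begin with. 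The only moment doing real work is the observation that, at $|S| = k$, the binomial $\binom{k+1}{k-|T|}$ collapses to $1$ only at the extreme $|T| = k$, which is exactly what couples Properties 1 and 2 and then powers Property 3.
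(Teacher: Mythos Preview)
Your proposal is correct and uses the same construction as the paper: $\cH$ is the family of indicator functions of $k$-subsets of a $(2k+1)$-element domain. The argument is essentially the same as well---both proofs show that for small $S$ there are multiple concepts in $\cH$ agreeing with $h$ on $S$---but your presentation is a bit more systematic: you prove a single counting formula $\binom{2k+1-|S|}{k-|T|}$ for the number of consistent concepts and read off all three properties from when this binomial equals $1$, whereas the paper argues each case separately by observing that $h|_{Z\setminus S}$ is non-constant and swapping a $0$ and a $1$ there to produce a second consistent concept. Your approach also makes explicit that $S=A$ is the \emph{unique} $k$-size teaching set, which is slightly stronger than what the paper states (though not needed downstream).
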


\begin{proof}
Let $\cH$ be the class of all functions $h: Z \to \{0, 1\}$ such that $|h^{-1}(1)| = k$. It is clear that $|\cH| = q$ and that it can be constructed in $(pq)^{O(1)}$ time. Next, we prove the three properties:
\begin{enumerate}
\item Consider any $h \in \cH$. It is simple to see that $h^{-1}(1)$ is a teaching set of $h$ and, by definition of $\cH$, it has size $k$. Thus, we have $\TS(h, \cH) \leq k$.

To show that $\TS(h, \cH) \geq k$, consider any $S \subseteq Z$ of size $k - 1$. Note that $|h|_S^{-1}(0)|, |h|_S^{-1}(1)| \leq |S| = k - 1$. This means that $h|_{Z \setminus S}$ is not a constant function. As a result, there must be another function $h': Z \to \{0, 1\}$ such that $h'|_S = h|_S$ and $|h'|_{Z \setminus S}^{-1}(0)| = |h|_{Z \setminus S}^{-1}(0)|$. The latter implies that $h' \in \cH$. Since $h'|_S = h|_S$, $S$ cannot be a teaching set for $h$ w.r.t. $\cH$.
\item Consider any set $S \subseteq Z$ of size $k$. If $h|_S \ne \bone$, then we have  $|h_S^{-1}(0)| \leq k$ and $|h_S^{-1}(1)| \leq k - 1$. Similar to above, this means that $h|_{Z \setminus S}$ is not a constant function and, thus, we can find a different $h' \in \cH$ such that $h'|_S = h|_S$. This means that $S$ cannot be a teaching set of $h$. 
\item Consider any set $S \subseteq Z$ of size $k$. If $h|_S \ne \bone$, then the previous point implies that $S$ cannot be a teaching set for $h$ w.r.t. $\cH$. On the other hand, if $h|_S = \bone$, then we have $h|_S = \bone|_S$, which also implies that $S$ cannot be a teaching set for $h$ w.r.t. $\cH \cup \{\bone\}$. As a result, we must have $\TS(h, \cH \cup \{\bone\}) > k$. \qedhere
\end{enumerate}
\end{proof}

We are now ready to present our reduction and prove our main theorem (\Cref{thm:main-red}). Note that the notations for handling tuples and sets of tuples were given at the beginning of \Cref{sec:prelim}. We also remark that $Z \times V$ in the construction presented below corresponds to the set $Y$ in Step II of the overview, whereas $V \times Z$ represents the replication of $V$ in Step III of the overview.

\begin{proof}[Proof of \Cref{thm:main-red}]
Let $G = (V = \{v_1, \dots, v_N\}, E)$ be an input instance to $k$-Dominating Set. Let $\cH = \{h_1, \dots, h_q\}$ be the concept class over domain $Z$ from \Cref{lem:gadget}.

We create the concept class $\cC$ as follows:
\begin{itemize}
\item Let the domain set $X$ be $(V \times Z) \cup (Z \times V)$.
\item For every $h \in \cH$, create a concept $c_h$ where $c_h(Z \times V) = \bone$ and 
\begin{align*}
c_h((v, z)) = h(z) & &\forall (v, z) \in V \times Z.
\end{align*}
\item For every $u \in V$ and $h \in \cH$, create a concept $c_{u, h}$ such that 
\begin{align*}
c_{u, h}((z, v)) = \ind[h(z)=1~\wedge~u=v] & &\forall (z, v) \in Z \times V,    
\end{align*}
and
\begin{align*}
c_{u, h}((v, z)) = \neg \ind[v \text{ dominates } u] & &\forall (v, z) \in V \times Z.
\end{align*} 
\end{itemize}

Note that $\cC$ consists of $n := q(N + 1)$ concepts.
Since $p \leq O(k), q \leq 2^{O(k)}$ and $\cH$ can be constructed in $(pq)^{O(1)}$ time, we can also conclude that $\cC$ can be constructed in $(pqN)^{O(1)} = 2^{O(k)}N^{O(1)}$ time as desired. We will next argue the completeness and soundness of the reduction. For convenience, let $\cC_{\cH} := \{c_h\}_{h \in \cH}$ and $\cC_u := \{c_{u, h}\}_{h \in \cH}$ for all $u \in V$. 
It would also be useful to note a couple of observations, which follow immediately from the definitions of $c_h, c_{u, h}$. (Note that concepts in $\cC_{\cH}$ are the same outside $V \times Z$; and concepts in $\cC_u$ are the same outside $Z \times \{u\}$.) 

\begin{observation} \label{obs:constraint}
For every $h \in \cH$, a set $S \subseteq X$ is a teaching set for $c_h$ w.r.t. $\cC_{\cH}$ iff $\pi_2(S \cap (V \times Z))$ is a teaching set of $h$ w.r.t. $\cH$.
\end{observation}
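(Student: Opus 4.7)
The plan is to unpack the definition of teaching set and exploit the two key structural facts about the concepts in $\cC_{\cH}$: (i) they all take value $1$ everywhere on $Z \times V$, hence any two of them agree on $Z \times V$; and (ii) on $V \times Z$, the value $c_h((v,z)) = h(z)$ depends only on the second coordinate $z$, not on $v$. These two facts together say that the only distinguishing power among concepts in $\cC_{\cH}$ comes from the second coordinates of elements of $S$ lying in $V \times Z$, which is exactly $T := \pi_2(S \cap (V \times Z))$.

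For the ``only if'' direction, I would start from an arbitrary $h' \in \cH \setminus \{h\}$ and use that $S$ is a teaching set of $c_h$ w.r.t. $\cC_{\cH}$ to extract some $x \in S$ with $c_h(x) \neq c_{h'}(x)$. By observation (i), such an $x$ cannot lie in $Z \times V$, so $x = (v,z) \in S \cap (V \times Z)$, and by observation (ii) we get $h(z) = c_h((v,z)) \neq c_{h'}((v,z)) = h'(z)$, with $z \in T$. Since $h'$ was arbitrary, $T$ is a teaching set of $h$ with respect to $\cH$.

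For the ``if'' direction, I would run the argument in reverse: given $h' \neq h$, pick $z \in T$ with $h(z) \neq h'(z)$, which by definition of $T$ comes from some $(v,z) \in S \cap (V \times Z)$, and then $c_h((v,z)) = h(z) \neq h'(z) = c_{h'}((v,z))$ witnesses the distinction within $S$.

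There is no real obstacle here; the statement is essentially a reformulation that the $V \times Z$ part of the domain is a ``disjoint union of copies of $Z$'' from the viewpoint of $\cC_{\cH}$, while the $Z \times V$ part carries no information. The only thing to be slightly careful about is writing the projection argument cleanly, since $T = \pi_2(S \cap (V \times Z))$ can collapse several elements of $S$ with the same second coordinate into a single element of $T$; but this collapse is harmless because those collapsed elements all give the same value under every $c_h$, so they were redundant for teaching anyway.
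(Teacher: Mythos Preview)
Your proposal is correct and is precisely the unpacking the paper has in mind: the paper does not give a detailed proof of this observation but simply notes it ``follow[s] immediately from the definitions of $c_h, c_{u,h}$,'' pointing out that concepts in $\cC_{\cH}$ are identical outside $V \times Z$. Your two structural facts (i) and (ii) are exactly that remark made explicit, and your two directions are the straightforward verification.
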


\begin{observation} \label{obs:vertex}
For every $u \in V$ and $h \in \cH$, a set $S \subseteq X$ is a teaching set for $c_{u, h}$ w.r.t. $\cC_u$ iff $\pi_1(S \cap (Z \times \{u\}))$ is a teaching set of $h$ w.r.t. $\cH$.
\end{observation}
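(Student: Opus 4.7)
The plan is to observe two things: first, that all concepts in $\cC_u$ agree on $X \setminus (Z \times \{u\})$, so only points in $Z \times \{u\}$ can witness a disagreement between $c_{u,h}$ and some $c_{u,h'} \in \cC_u$ with $h' \neq h$; and second, that the restriction of $c_{u,h}$ to $Z \times \{u\}$ is literally a copy of $h$ under the bijection $\pi_1: Z \times \{u\} \to Z$. Once these two facts are in place, the equivalence in the statement will drop out by unwinding the definition of a teaching set.

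For the first step, I would go straight to the definition of $c_{u,h}$. On $V \times Z$, the value $\neg \ind[v \text{ dominates } u]$ depends only on $u$ and is independent of $h$, so every $c_{u,h'} \in \cC_u$ agrees with $c_{u,h}$ there. On $Z \times (V \setminus \{u\})$, the indicator $\ind[h(z) = 1 \wedge u = v]$ is $0$ since $v \neq u$, so every concept in $\cC_u$ is constantly $0$ there. Therefore, for any $h' \in \cH \setminus \{h\}$, if some $x \in S$ certifies $c_{u,h}(x) \neq c_{u,h'}(x)$, then $x \in S \cap (Z \times \{u\})$.

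For the second step, on $Z \times \{u\}$ we have $c_{u,h}((z, u)) = \ind[h(z) = 1 \wedge u = u] = h(z)$. Hence under $\pi_1$, the family $\{c_{u,h}|_{Z \times \{u\}}\}_{h \in \cH}$ is exactly $\cH$ itself. Combining both steps, the condition ``for every $h' \in \cH \setminus \{h\}$ there exists $x \in S$ with $c_{u,h}(x) \neq c_{u,h'}(x)$'' is equivalent to ``for every $h' \neq h$ there exists $z \in \pi_1(S \cap (Z \times \{u\}))$ with $h(z) \neq h'(z)$,'' which is precisely the statement that $\pi_1(S \cap (Z \times \{u\}))$ is a teaching set of $h$ w.r.t.\ $\cH$. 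There is no substantive obstacle here; the only care needed is to keep straight the two distinct roles played by $V$ and $Z$ in the two halves $V \times Z$ and $Z \times V$ of the domain, which is exactly what the construction of $c_{u,h}$ was engineered to separate.
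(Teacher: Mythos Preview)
Your proposal is correct and follows exactly the approach the paper sketches: the paper states that the observation ``follows immediately from the definitions'' together with the parenthetical hint that ``concepts in $\cC_u$ are the same outside $Z \times \{u\}$,'' and your two steps simply make this explicit by verifying the agreement on $V \times Z$ and on $Z \times (V \setminus \{u\})$, then identifying $c_{u,h}|_{Z\times\{u\}}$ with $h$ via $\pi_1$.
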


\paragraph{Completeness.} Suppose that there is a $k$-size dominating set $T \subseteq V$. We will show that $\RTD(\cC) \leq k$. By \Cref{lem:rtd-characterization}, this is equivalent to showing that, for every $\cC' \subseteq \cC$, there exists $c \in \cC'$ such that $\TS(c, \cC') \leq k$. To do this, consider the following cases based on whether $\cC' \cap \cC_{\cH} = \emptyset$.
\begin{itemize}
\item Case I: $\cC' \cap \cC_{\cH} \ne \emptyset$. That is, there is $h \in \cH$ such that $c_h \in \cC'$. Let $S_h$ denote a $k$-size teaching set of $h$ w.r.t. $\cH$ (guaranteed to exist by \Cref{lem:gadget}). Then, let $S = \nu(T, S_h)$. Note that $|S| = |S_h| = k$. We will show that $S$ is a teaching set for $c_h$ w.r.t. $\cC$.

First, for any other $c^*_{h'} \in \cC' \cap \cC_{\cH}$, \Cref{obs:constraint} implies that $c^*_{h'}|_S \ne c_h|_S$.

Next, consider any $c_{u, h'} \in \cC' \setminus \cC_{\cH}$. Since $T$ is a dominating set of $V$, there exists $v \in T$ that dominates $u$. This means that $c_{u, h'}((v, z)) = 0$ for all $z \in Z$. By definition of $\nu(T, S_h)$, there must exists $z^v \in Z$ such that $(v, z^v) \in \nu(T, S_h)$. This means that $c_{u, h'}|_S$ cannot be the all-one function. On the other hand, by \Cref{lem:gadget}, we have that $c_h|_S$ is the all-one function. Thus, we have $c_h|_S \ne c_{u, h'}$.

Combining the above two results, $S$ is a teaching set for $c_h$ with respect to $\cC'$.

\item Case II: $\cC' \cap \cC_{\cH} = \emptyset$. In this case, let $c_{u, h}$ be any element of $\cC'$. Again, let $S_h$ denote a $k$-size teaching set of $h$ w.r.t. $\cH$ (guaranteed to exist by \Cref{lem:gadget}). Then, let $S = S_h \times \{u\}$.

Now, consider any $c_{u', h'} \in \cC' \setminus \{c_{u, h}\}$ based on two cases:
\begin{itemize}
\item Case I: $u' \ne u$. By definition, we have $c_{u', h'}|_S$ is the all-zero function. However, \Cref{lem:gadget} implies that $c_{u, h}|_S$ is the all-one function. Therefore, $c_{u, h}|_S \ne c_{u', h'}|_S$.
\item Case II: $u' = u$ and $h' \ne h$. Since $S_h$ is a teaching set for $h$ w.r.t. $\cH$, \Cref{obs:vertex} implies that $c_{u, h}|_S \ne c_{u', h'}|_S$.
\end{itemize}
As a result, we can conclude that $S$ is a teaching set of $c_{u, h}$ with respect to $\cC'$, and $|S| = k$.
\end{itemize}

Thus, we have $\RTD(\cC) \leq k$ as desired.

\paragraph{Soundness.}
Suppose contrapositively that $\RTD(\cC) \leq k$. By \Cref{lem:rtd-characterization}, this means that there exists $c \in \cC$ such that $\TS(c, \cC) \leq k$. Consider the following two cases based on whether $c \in \cC_{\cH}$.
\begin{itemize}
\item Case I: $c \notin \cC_{\cH}$. That is, $c = c_{u, h}$ for some $u \in V, h \in \cH$. Let $S$ be a $k$-size teaching set of $c_{u, h}$ w.r.t. $\cC$. From \Cref{obs:vertex}, $\pi_1(S \cap (Z \times \{u\}))$ must be a teaching set for $h$ w.r.t. $\cH$. By \Cref{lem:gadget} and from $|\pi_1(S \cap (Z \times \{u\}))| \leq |S| \leq k$, this implies that $|\pi_1(S \cap (Z \times \{u\}))| = k$ and that $h|_{\pi_1(S \cap (Z \times \{u\}))} = \bone$. The former means that $S \subseteq Z \times \{u\}$ and the latter means that $c_{u, h}|_S = \bone$. On the other hand, $c_h|_{Z \times \{u\}}$ is also the all-one function by definition. This contradicts with the fact that $S$ is a teaching set for $c_{u, h}$ w.r.t. $\cC$. Thus, this case cannot occur.
\item Case II: $c \in \cC_{\cH}$. That is, $c = c_h$ for some $h \in \cH$. Let $S$ be a $k$-size teaching set of $c_h$ w.r.t. $\cC$. This also implies that $S$ is a teaching set of $c_h$ w.r.t. $\cC_{\cH}$. \Cref{obs:constraint} means that $\pi_2(S \cap (V \times Z))$ must be a teaching set for $h$ w.r.t. $\cH$. By \Cref{lem:gadget} and from $|\pi_2(S \cap (V \times Z))| \leq |S| = k$, this implies that $|\pi_2(S \cap (V \times Z))| = k$ and that $h|_{\pi_2(S \cap (V \times Z))} = \bone$. The former means that $S \subseteq (V \times Z)$ and the latter means that $c_h|_S = \bone$.

Let $T = \pi_1(S)$. Since $|S| \leq k$, we also have $|T| \leq k$. We will next argue that $T$ is a dominating set of $G$. To see this, consider any vertex $u \in V$. Since $c_h|_S = \bone$ and $S$ is a teaching set of $c_h$ w.r.t. $\cC$, there must be $(v, z) \in S$ such that $c_{u, h}((v, z)) = 0$. By definition, however, this means that $v$ dominates $u$. Since $v \in T$, we can conclude that $T$ is a dominating set of size at most $k$ as desired.
\end{itemize}

Hence, if $\RTD(\cC) \leq k$, there must be a dominating set in $G$ of size at most $k$. This concludes our proof.
\end{proof}

\section{Conclusion and Open Questions}
\label{sec:conclusion}

We show that it is unlikely that a polynomial-time algorithm exists for computing the recursive teaching dimension of a concept class that is given explicitly as  input. Another model studied in literature is the ``implicit'' setting, where an input is a circuit that computes $c(x)$ when feeds in (indices of) $c$ and $x$ as inputs. Strong lower bounds are known for VC Dimension and Littlestone's Dimension in this model~\cite{Schaefer99,Schaefer00,MosselU02}. Obtaining one for RTD is an interesting direction.

It would also be interesting to extend the hardness to rule out approximation algorithms for RTD as well. Our reduction is \emph{not} approximation-preserving. If one can come up with an approximation-preserving reduction, then applying hardness of approximating $k$-Dominating Set from the parameterized complexity literature (e.g.~\cite{SLM19}) would yield inapproximability results for RTD.

We remark that our reduction (\Cref{thm:main-red}) also implies that $k$-RTD is hard for the class \emph{LOGNP}, defined by Papadimitriou and Yannakakis~\cite{PapadimitriouY96} since they showed that $\lfloor \log N \rfloor$-Dominating Set is LOGNP-hard. We do not know if the $k$-RTD problem belongs to LOGNP; proving such a containment--or identify the class for which it is complete--is an intriguing open question. The same problem remains open for Littlestone's Dimension~\cite{FrancesL98}.

Similar to the above, our reduction implies that $k$-RTD, when parameterized by $k$, is hard for the class W[2] (the second level of the W-hierarchy\footnote{A more detailed formulation of W-hierarchy and FPT reductions can be found e.g. in \cite{DowneyF13}.}) as $k$-Dominating Set is W[2]-complete~\cite{DowneyF95}. Again, we do not know whether $k$-RTD belongs to W[2].

Finally, there are still other learning-theoretic quantities whose computational complexity is not yet understood. For example, the self-directed learning complexity~\cite{GoldmanRS93,GoldmanS94} of an explicitly given concept class can be computed in $n^{O(\log n)}$ time and therefore, similar to RTD, is unlikely to be NP-hard. Ben-David and Eiron~\cite{Ben-DavidE98} asked whether it can be computed in polynomial time. To the best of our knowledge, this question remains open til this day.

\bibliographystyle{alpha}
\bibliography{ref}

\appendix

\section{Proof of \Cref{thm:log-domset-lb}}
\label{sec:proof-domset}

To prove \Cref{thm:log-domset-lb}, we require a reduction from 3SAT to $k$-Dominating Set. There are multiple such reductions that work, we use the one by P{\u{a}}tra{\c{s}}cu and Williams~\cite{PatrascuW10}, summarized below:

\begin{lemma}[\cite{PatrascuW10}]
For any $k \in \N$, there is an $N^{O(1)}$-time reduction from $n$-variable $m$-clause 3SAT to $k$-Dominating Set on $N$-vertex graph where $N = k \cdot 2^{\lceil n / k \rceil} + m + k$.
\end{lemma}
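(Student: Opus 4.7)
The plan is to use the standard partitioning reduction for Dominating Set. Starting from a 3SAT instance $\varphi$ with variable set $X$, $|X| = n$, partition $X$ arbitrarily into $k$ blocks $X_1, \ldots, X_k$ with $|X_i| \le \lceil n/k \rceil$. For each block $X_i$, form a vertex cloud $A_i$ containing one vertex $v_{i,\sigma}$ for every partial assignment $\sigma : X_i \to \{0,1\}$; make $A_i$ into a clique of size at most $2^{\lceil n/k \rceil}$. Add $k$ enforcer vertices $e_1, \ldots, e_k$, where $e_i$ is joined to every vertex of $A_i$ and to nothing else. Finally, for each clause $c$ of $\varphi$ add a clause vertex $w_c$ and join it to every $v_{i,\sigma}$ such that $\sigma$ already satisfies $c$ (that is, $c$ contains a literal over $X_i$ that $\sigma$ makes true). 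This produces a graph with $k \cdot 2^{\lceil n/k \rceil} + k + m = N$ vertices in time $N^{O(1)}$, so the bookkeeping of the vertex count and the running time is routine.

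Completeness is the easy direction: given a satisfying assignment $\tau$ of $\varphi$, the set $D = \{v_{i,\tau|_{X_i}} : i \in [k]\}$ has size $k$; it dominates each $A_i$ (because $A_i$ is a clique), the enforcer $e_i$ (via a direct edge), and every clause vertex $w_c$ (because $\tau$ satisfies $c$ via at least one literal, say in block $i$, so $v_{i,\tau|_{X_i}}$ is adjacent to $w_c$).

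For soundness, suppose $D$ is a dominating set of size $k$. The central structural observation is that the sets $\{e_i\} \cup A_i$ are pairwise disjoint and the only neighbours of $e_i$ lie in $A_i$; therefore $D$ must contain at least one vertex from each $\{e_i\} \cup A_i$, using up all $k$ of its slots and forcing $D$ to contain no clause vertex. If $e_i \in D$, I replace it with any $v_{i,\sigma} \in A_i$: this only enlarges the dominated set, so without loss of generality $D = \{v_{i,\sigma_i}\}_{i=1}^k$. The partial assignments $\sigma_i$ glue into a total assignment $\tau$, and every clause vertex $w_c$ (not in $D$) must have a neighbour in $D$, which by construction is some $v_{i,\sigma_i}$ whose $\sigma_i$ (and hence $\tau$) satisfies $c$.

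The main obstacle is the soundness step: one must rule out dominating sets that ``cheat'' by choosing enforcer or clause vertices in place of actual partial assignments. The disjointness of the $\{e_i\} \cup A_i$ blocks together with the fact that enforcers have no external neighbours handles the first job via a pigeonhole count, and the swap $e_i \mapsto v_{i,\sigma}$ canonicalises the remaining freedom. Once $D$ is pushed into this canonical form, extracting the satisfying assignment is immediate; everything else---the vertex count, the running-time bound, and completeness---is routine verification.
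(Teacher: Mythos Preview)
The paper does not actually prove this lemma; it merely cites it from P\u{a}tra\c{s}cu--Williams and uses it as a black box in the proof of \Cref{thm:log-domset-lb}. Your write-up is a correct and complete reconstruction of the standard partitioning construction from that reference: the clique blocks $A_i$ of partial assignments, the enforcer vertices $e_i$ forcing one choice per block via the disjointness/pigeonhole count, and the clause vertices witnessing satisfaction. Both directions of the equivalence are argued cleanly, and the swap $e_i \mapsto v_{i,\sigma}$ is justified by $N[e_i] \subseteq N[v_{i,\sigma}]$.

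One cosmetic point: you say each $A_i$ has size \emph{at most} $2^{\lceil n/k\rceil}$, but the lemma states the vertex count as an equality $N = k\cdot 2^{\lceil n/k\rceil} + m + k$. To match exactly, pad each block $X_i$ with dummy variables (appearing in no clause) so that $|X_i| = \lceil n/k\rceil$ for all $i$; this changes nothing in the argument.
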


\begin{proof}[Proof of \Cref{thm:log-domset-lb}]
We use the above reduction with $k = \lceil \sqrt{n} \rceil$. We have\footnote{Note that we may assume w.l.o.g. that $m \leq O(n^3)$.} $N = 2^{\Theta(\sqrt{n})}$ and therefore $k = O(\log N)$. Thus, if there is an $N^{o(\log N)}$-time algorithm for $k$-Dominating Set, then we can run this algorithm to solve 3SAT in time $N^{o(\log N)} = 2^{o(n)}$, violating ETH.
\end{proof}

\end{document}